\documentclass[]{fundam}
\usepackage{url} % takes care of hyperlinks, preferred over hyperref
\usepackage{ amssymb }
\usepackage{scalerel}
\def\sq{\mathbin{\scalerel*{\strut\rule[-.5ex]{2ex}{2ex}}{\cdot}}}
\usepackage[ruled,lined,linesnumbered]{algorithm2e}% provides Algorithm environment
\usepackage{graphicx}% allows for inclusion of graphic files (res)
\usepackage{tikz}
\usepackage{forest}
\usetikzlibrary{automata, positioning, arrows, arrows.meta}

\begin{document}
	
	%%%%%%%  to be filled in by copy-editor:  %%%%%%%%%%%
	%
	\setcounter{page}{1}
	\publyear{2026}
	\papernumber{0001}
	\volume{178}
	\issue{1}
	%
	%%%%%%%%%%%%%%%%%%%%%%%%%%%%%%%%%

	\title{Fuzzy Pattern Matching in Ordered Structures}

	\author{Armen Kostanyan\corresponding \\ 
		American University of Armenia\\ Yerevan, Armenia\\
		armko@aua.am
		\and Arevik Harmandayan \\
	American University of Armenia\\ Yerevan, Armenia\\ 
		arevik.harmandayan@gmail.com } 
	
	\maketitle
	
	\runninghead{A. Kostanyan, A. Harmandayan}{Fuzzy Pattern Matching in Ordered Structures}
	
	\begin{abstract}
	
	The problem of pattern matching, that is, finding all occurrences of a given pattern in a string, is one of the fundamental problems in computer science that has applications in many areas.
	
	In this paper, we consider \emph{fuzzy} patterns, defined as sequences of fuzzy properties over the basic alphabet. We first consider fuzzy pattern matching for sequences of elements of the basic alphabet and then extend the problem to partially ordered sets of nodes labeled by elements of the basic alphabet. For sequences, we seek segments that match the pattern, whereas for partially ordered structures, we seek saturated chains of nodes that match the pattern.
	
	The key concept underlying the solutions to these problems is the notion of a \emph{trajectory}, which generalizes the concept of the prefix function used in the \emph{Knuth--Morris--Pratt (KMP)} algorithm. A trajectory is processed together with the corresponding data structure, allowing the proposed algorithms to be represented as transition systems whose states are trajectories for sequences and trajectories associated with nodes for partially ordered structures. The trajectory-based approach provides a unified framework for fuzzy pattern matching in various data structures.
	
	\end{abstract}	
	
	\begin{keywords}
		Fuzzy pattern matching, Approximate string matching, KMP-string matching 
	\end{keywords}

\section{Introduction}
String matching, also referred to as pattern matching, is a fundamental problem in computer science. Given a pattern $P$ and a text $T$, the exact string-matching problem consists of finding all occurrences of $P$ in $T$. Efficient algorithms for string matching are widely used in applications such as plagiarism detection, cybersecurity, natural language processing, and bioinformatics \cite{ref1}.

In many practical situations, however, exact matching is too restrictive. A pattern may occur in a text with a small number of errors caused by misspellings, mutations, or other variations.
This motivates \emph{approximate string matching}, 
in which a text segment is considered an occurrence of a pattern if the distance between the two strings does not exceed a given threshold.
Two classical models are Hamming-distance matching, which allows a bounded number of character substitutions, and edit-distance matching, which additionally allows insertions and deletions \cite{ref2}.

A related notion is \emph{fuzzy string matching}, which generally refers to matching techniques that allow a degree of similarity or uncertainty in determining whether two strings correspond. 
Unlike exact matching and, in some formulations, distance-based approximate matching, fuzzy matching may represent the degree to which a matching condition is satisfied rather than relying solely on a binary match/mismatch criterion or a fixed number of editing operations. Fuzzy string matching has applications in entity resolution, search and information retrieval, and the automatic classification of noisy textual data \cite{ref3,ref4}. Fuzzy approaches have also been investigated in bioinformatics, where uncertainty and variability in biological sequences can make rigid character-by-character comparison inadequate \cite{ref5}.

The distinction between approximate and fuzzy matching is not always made consistently in the literature, and the two terms are sometimes used interchangeably. In this paper, we use the term \emph{fuzzy matching} in a more specific sense. We consider fuzzy patterns to be sequences of fuzzy properties associated with pattern symbols. 
Thus, a pattern does not necessarily specify an exact character for each pattern symbol; instead, each pattern symbol specifies a property that the corresponding text character is required to satisfy to a prescribed degree.
The matching problem is therefore to find all text segments whose characters satisfy the corresponding fuzzy properties to the specified degree.

Our previous works have addressed several problems in exact and fuzzy string matching. In \cite{ref6}, pattern periodicity was exploited to improve the efficiency of the preprocessing phase of the finite-automaton method and the Knuth--Morris--Pratt (KMP) algorithm. 
In \cite{ref7}, a nondeterministic transition system was constructed to describe the possible ways of processing a given text in order to find all occurrences of a fuzzy pattern.
In \cite{ref8}, an efficient algorithm was proposed for finding all occurrences of a fuzzy pattern in a text, following the principles of the KMP algorithm and using a two-dimensional prefix table. 
In \cite{ref9,ref10}, a different formulation of fuzzy patterns was considered, in which fuzzy properties were associated with text segments. The corresponding problem of finding segmentations of the text—i.e., decompositions into adjacent segments—such that the resulting segments match a fuzzy pattern to a specified degree was investigated.

This paper considers the fuzzy pattern matching problem in two settings.

The first setting extends classical pattern matching to fuzzy pattern matching, where a text segment is the search object. The second generalizes this setting by extending the search space to a finite partially ordered set of nodes, where a saturated chain of nodes serves as the search object. The solutions to both problems are based on the notion of a \emph{trajectory}, which generalizes the prefix function used in the KMP algorithm by maintaining information about the compatibility of suffixes of the processed sequence with prefixes of the fuzzy pattern.

The paper is organized as follows. Section 2 introduces the basic concepts and definitions. Section 3 presents a solution to the fuzzy pattern matching problem for \emph{linear structures}. The solution takes the form of an algorithm whose control component follows the structure of the KMP algorithm, while its operational component processes a trajectory object. Section 4 presents a solution to the fuzzy pattern matching problem for \emph{hierarchical structures} represented by finite ordered rooted trees. The control component of the algorithm traverses the tree in preorder, while its operational component computes the trajectory object at each visited node. Finally, Section 5 concludes the paper and summarizes the main results.
\section{Preliminaries}
\subsection{Lists and strings}
A \emph{list} is a sequence of elements from a given set, defined recursively as follows. The empty sequence is a list. A nonempty list consists of a distinguished element called the \emph{head} and a list called the \emph{tail}, formed by the remaining elements.

The following standard operations are defined for lists:
\begin{itemize}
	\item $cons(elem,list)$: given $elem$ and $list$, creates and returns a new list whose head is $elem$ and whose tail is $list$;
	\item $head(list)$: given a nonempty $list$, returns its head;
	\item $tail(list)$: given a nonempty $list$, returns its tail.
\end{itemize}
In what follows, we denote a list by a sequence of its elements enclosed in square brackets.
	
Let $\Sigma$ be a finite alphabet of characters. Following standard notation \cite{ref11}, we denote the set of all strings over the alphabet $\Sigma$, including the empty string $\varepsilon$, by $\Sigma^{*}$. The length of a string $x\in\Sigma^{*}$ is denoted by $|x|$. In particular, $|\varepsilon|=0$.

Given strings $x,y\in\Sigma^{*}$, we say that $y$ is a \emph{suffix} of $x$, denoted by $y\sqsupseteq x$, if $x=zy$ for some $z\in\Sigma^{*}$. We say that $y$ is a \emph{proper suffix} of $x$, denoted by $y\sqsupset x$, if $y\sqsupseteq x$ and $y\neq x$.

\subsection{Fuzzy symbols and fuzzy patterns}

Suppose $(D,\leq,0,1)$ is a linearly ordered set with the smallest element
$0$ and the largest element $1$. A \emph{fuzzy subset} $A$ of a set $U$
\cite{ref12} is defined by a membership function
$\mu_A:U\to D$ that associates with each element $x\in U$ a value
$\mu_A(x)\in D$, called the \emph{degree of membership} of $x$ in $A$.

A fuzzy subset $A$ of $U$ is usually represented in additive form as
\[
A=\sum_{x\in U}\mu_A(x)/x.
\]
We say that an element $x$ certainly belongs to $A$ if
$\mu_A(x)=1$, and that it certainly does not belong to $A$ if
$\mu_A(x)=0$. If $0 < \mu_A(x) < 1$, we say that $x$ belongs to $A$
with degree $\mu_A(x)$.

We define a \emph{fuzzy symbol} over $\Sigma$ alphabet as a fuzzy subset of
$\Sigma$ that assigns a degree of membership to each character in
$\Sigma$. 
Given a character $c\in\Sigma$ and a fuzzy symbol $\alpha$ with membership function
$\mu_\alpha$, we say that $c$ matches
$\alpha$ with degree $\mu_\alpha(c)$.
In the complexity analysis that follows, we assume, without further mention, that $\mu_\alpha(c)$ can be computed in $O(1)$ time.

Let $\Phi(\Sigma)$ denote the set of all fuzzy symbols over $\Sigma$.
We define a \emph{fuzzy pattern} as a non-empty finite sequence of elements from $\Phi(\Sigma)$ and assume that it is represented as an array $P[1..m]$ of $m\geq 1$ components.

Given a threshold $\mu\in D$, a character $c\in\Sigma$, and a fuzzy
symbol $\alpha \in \Phi(\Sigma)$, we say that $c$ $\mu$-\emph{matches}
$\alpha$, denoted by $c\sim_\mu\alpha$, if
$\mu_\alpha(c)\geq\mu$.
This definition naturally extends to strings of equal length consisting
of characters and fuzzy symbols, respectively. Namely, given a threshold
$\mu\in D$, a string $x[1..q]$ of characters from $\Sigma$, and an array
$P[1..q]$ of fuzzy symbols from $\Phi(\Sigma)$, we say that $x$
$\mu$-\emph{matches} $P$, denoted by $x\sim_\mu P$, if
\[
x[j]\sim_\mu P[j] \text{ for all } 1\leq j\leq q.
\]
\subsection{Borders and trajectories}    
  Let $P[1..m]$ be a fuzzy
  pattern, let $x\in\Sigma^*\setminus\{\varepsilon\}$ be a non-empty string with a length of no more than $m$, and let $\mu\in D$. 
  
  A string $y\in\Sigma^*$ is called a \emph{border} of $x$ if $y\sqsupset x$ and $y\sim_\mu P[1..|y|]$. In other words, the last $|y|$ characters of $x$
  $\mu$-match the first $|y|$ fuzzy symbols of $P$.
	Given the parameters $P$ and $\mu$, we denote the longest border of $x$ by $LB(x)$, without explicitly mentioning these parameters. 
  
  For every string $x \in \Sigma^{*}$, the \emph{trajectory} $\tau(x)$ is a list of strings defined by the following recursive rules:
\begin{itemize}
	\item If $x=\varepsilon$, then $\tau(x)=[\varepsilon]$;
	\item If $x\neq\varepsilon$ and $LB(x)=\varepsilon$, then $\tau(x)=[\varepsilon]$;
	\item If $x\neq\varepsilon$ and $LB(x)\neq\varepsilon$, then
	$\tau(x)=cons(LB(x),\tau(LB(x)))$.	
\end{itemize}

In addition to the standard list operations, we define the following operations on trajectories.
\begin{itemize}
	\item $length(\tau)$: given a trajectory $\tau$, returns the number of strings contained in $\tau$;
	\item $width(\tau)$: given a trajectory $\tau$, returns the value $|head(\tau)|$.

We have
\[
1 \leq length(\tau) \leq m+1, \qquad 0 \leq width(\tau) \leq m.
\]
\end{itemize}
In the subsequent analysis, we assume that these operations run in $O(1)$ time by keeping track of the corresponding values.	

We call the trajectory $[\varepsilon]$ \emph{empty} and denote it by $\tau_0$. Note that
$length(\tau_0)=1, width(\tau_0)=0$.  

\begin{example} \label{ex1}	
	Let us choose
	\[
	\Sigma=\{1,2,3,4,5\},\qquad
	D=\{0.0,0.25,0.5,0.75,1.0\},
	\]
	and define the fuzzy symbols $S$ (small), $M$ (medium), and $L$ (large)
	as follows:
	\[
	S=1.0/1+0.75/2+0.5/3+0.25/4+0.0/5,
	\]
	\[
	M=0.0/1+0.75/2+1.0/3+0.75/4+0.0/5,
	\]
	\[
	L=0.0/1+0.25/2+0.5/3+0.75/4+1.0/5.
	\]
	
	For the fuzzy pattern $P=SMSLMM$, the string $x=522142$, and the
	threshold $\mu=0.75$, we obtain
	\[
	\tau(x)= [22142, 142, 2, \varepsilon];
	\]
	\[length(\tau(x))=4, \qquad	width(\tau(x))=5.
	\]
\end{example}

\section{Fuzzy pattern matching in linear structures}

In this section, we consider the problem of fuzzy pattern matching in a string of characters from the basic alphabet. We say that a fuzzy pattern occurs in such a string if there is a segment of the same length as the pattern that $\mu$-matches the pattern.

The problem of finding all occurrences of a fuzzy pattern in a string generalizes the classical string-matching problem and was solved in \cite{ref8} using a dynamic programming method based on a two-dimensional prefix table, which generalizes the prefix-function array used in the KMP algorithm. In this section, we propose a new algorithm for solving this problem based on the \emph{trajectory algebra}, with the same asymptotic time complexity and a lower asymptotic space complexity.

\subsection{Fuzzy \emph{L}-matching problem}

Let $X[1..n]$ be a string of length $n$ over the alphabet $\Sigma$, let
$P[1..m]$ be a fuzzy pattern of length $m$, and let
$\mu\in D$ be a threshold.

We define a \emph{$\mu$-occurrence} of the fuzzy pattern $P$ in $X$ as
a segment $X[i-m+1..i], m \leq i \leq n$, such that
\[
X[i-m+1..i]\sim_\mu P[1..m].
\]
Accordingly, we define the problem of fuzzy pattern matching as finding
all $\mu$-occurrences of $P$ in $X$, which are identified by their end
positions in $X$.

We call this problem the \emph{fuzzy L-matching problem}.
\newpage
\subsection{Trajectory algebra}

We define the following operations on trajectories. Below, we assume that
$\tau$ is a trajectory with $head(\tau)=x$.

\begin{itemize}
	\item
	$isPromoted(\tau,c)$, $c\in\Sigma$.
	Given that $|x|<m$ and $x\sim_\mu P[1..|x|]$, checks whether
	$xc$ $\mu$-matches the corresponding prefix of $P$, i.e.,
	\[
	xc\sim_\mu P[1..|x|+1].
	\]
	This check reduces to testing $c\sim_\mu P[|x|+1]$ and thus
	has complexity $O(1)$.
	
	\item
	$promote(\tau,c)$, $c\in\Sigma$.
	Given that the precondition of the $isPromoted$ operation is satisfied and $isPromoted(\tau,c)=true$, the operation returns the trajectory formed by the strings in the following set, ordered by decreasing length:
	\[
	 \tau'=
	\bigl\{
	x'c \mid x'\in\tau,\;
	c\sim_\mu P[|x'|+1]
	\bigr\} \cup \{ \varepsilon \}.
	\]
	Note that $\tau'$ satisfies the trajectory requirements; that is,
	each subsequent element is the longest border of its preceding
	element, and $head(\tau')=xc$.
	The returned trajectory satisfies 
	\[length(\tau') \leq length(\tau)+1, width(\tau') = width(\tau)+1.\] 
	The operation has complexity $O(m)$.
	
	\item
	$rollback(\tau)$.
	Given that $|x|>0$, returns the trajectory
	$\tau'=tail(\tau)$.	The returned trajectory satisfies 
	\[length(\tau') \leq length(\tau)-1, width(\tau') \leq width(\tau)-1. \]  
	The operation has complexity $O(1)$.
\end{itemize}

We call the set of trajectories, together with their basic operations extended by the operations in the set $\{isPromoted, promote, rollback\}$, the \emph{trajectory algebra}, and denote it by $\Xi$. In this section, we associate the following meanings with the elements of $\Xi$: 
If $\tau \in \Xi$, then $\tau$ consists of those suffixes $x'$ of the portion of the text processed so far for which $x'\sim_\mu P[1..|x'|]$. As the text is processed, the trajectory changes accordingly.
When $width(\tau)=m$, a $\mu$-occurrence of the fuzzy pattern is detected, and the trajectory is reset.

The notion of a \emph{trajectory} generalizes the role of the prefix function used in the KMP algorithm. In the classical KMP algorithm for exact pattern matching \cite{ref13,ref14}, the prefix function array of length $m$ is computed during the preprocessing phase based solely on the pattern. In the case of a fuzzy pattern, the text is processed together with the pattern, and the matching history is continuously maintained and updated in a trajectory of length at most $m$.
 \subsection{Fuzzy \emph{L}-matching algorithm}
 Figure \ref{fig:1} presents a solution to the fuzzy pattern matching problem for linear structures based on the trajectory algebra.
$\newline$

\begin{algorithm}[H] \caption{$FL-Matching //$ \text{Prints all valid shifts}}     
	
	\KwIn{
		
		\text{   } Fuzzy pattern $P=P[1..m]$, string $X=X[1..n]$, and threshold $\mu \in D$
	}
	\KwOut{
		All $\mu$ - occurrences of $P$ in $X$  
	}
	
	$\tau=\tau_0$        //current trajectory
	
	\For{$i = 1$ \textbf{to} $n$}{ 
		\While{
			$length(\tau)>1$ and  not $isPromoted(\tau, X[i])$ 			
		}
		{ 
			$\tau = rollback( \tau ) $
		}
		\If{$isPromoted(\tau, X[i])$}{
			$\tau = promote( \tau, X[i] )$ }
		\If{$width(\tau) == m$ //entire pattern matched}                 {
			$Print( i ) $ 
			
			$\tau = rollback( \tau )$
		}
		
	}
\end{algorithm} 
\begin{figure}[!h] 
	\caption{The $FL-Matching$ algorithm.}
	\label{fig:1}
\end{figure}

\subsection{Fuzzy linear matching in terms of a transition system} \label{sec 3.3}
The $FL-Matching$ algorithm can be viewed as a process in the trajectory algebra driven by the successive characters of the input sequence $X[1..n]$.

More precisely, consider the labeled transition system
\[
T_1=(S,s_0,F,\Sigma,\Delta),
\]
constructed for the fuzzy pattern $P$ as follows:

\begin{itemize}
	
	\item[$\sq$]
	$S=\Xi$ is the set of \emph{states};
	\item[$\sq$]
	$s_0=\tau_0$ is the initial state;
	\item[$\sq$]
	$F\subseteq S$ is the set of final states, consisting of trajectories of width $m$. Upon reaching a final state, an occurrence of the fuzzy pattern in the text is detected;
	\item[$\sq$]
	$\Sigma$ is the input alphabet;
	\item[$\sq$]
	$\Delta:S\times(\Sigma\cup\{\epsilon\})\to S$ is the transition
	function defined by the following rules. In what follows, $\epsilon$ denotes an $\epsilon$-transition that consumes no input symbol, whereas $\varepsilon$, as defined above, denotes the empty string.	
	 
	$L1.$ \emph{Trajectory reduction:} 
	\[
	\frac
	{
		length(\tau) > 1,  
		\neg isPromoted(\tau, c)
	}
	{ \tau \xrightarrow{c} \tau', 
		\tau' = rollback(\tau)
	},
	\] 
	
	$L2.$ \emph{Trajectory update:}
	\[
	\frac{width(\tau) < m, 
		isPromoted(\tau, c) 
	}
	{\tau \xrightarrow{c} \tau', 
		\tau' = promote(\tau, c)
	}, 
	\]
	
	$L3.$ \emph{Trajectory reset:}
	\[
	\frac{ width(\tau) = m }
	{ \tau \xrightarrow{\epsilon} \tau',  
		\tau'= rollback(\tau) 
	}. 
	\]
\end{itemize}

\begin{example} \label{ex2}
	
	Let us represent the execution of the $FL-Matching$ algorithm in the transition system $T_{1}$, assuming that
	 \[
	 P[1..4] = SMSL, 
	 \]
	 \[
	 X[1..8] = 13231425,
	 \] 
	 \[
	 \mu=0.75,
	 \]
	and that the fuzzy symbols $S$, $M$, and $L$ are defined as in Example $\ref{ex1}$.
	 
	The execution is as follows:
	\[
	\tau_{0} = [\varepsilon]
	\xrightarrow{132} 
	//\text{applying rule } L2 \text{ three times}
	\]
	\[
	\tau_{1} = [132, 2, \varepsilon]
	\xrightarrow{3} 
	//\text{applying rule } L1 
	\]
	\[
	\tau_{2} =[2, \varepsilon]
	\xrightarrow{314}  
	//\text{applying rule } L2 \text{ three times,} \textit{ matched}  
	\]
	\[
	\tau_{3}^{*} = [2314, 14, \varepsilon]
	\xrightarrow{\epsilon}
	//\text{applying rule } L3
	\]
	\[	
	\tau_{4} = [14, \varepsilon]
	\xrightarrow{25}
	//\text{applying rule } L2 \text{ twice,} \textit{ matched}    
	\] 
	\[
	\tau_{5}^{*} = [1425, \varepsilon]
	\xrightarrow{\epsilon} 
	//\text{applying rule } L3  
	\]
	\[
	\tau_{6} = [\varepsilon].
	\]	

	Thus, we have found two occurrences of the fuzzy pattern $P=SMSL$ in $X$, namely, the segments $2314$ and $1425$.
	 
\end{example}

\subsection{Justification and analysis}
Let us call the point immediately before the \textbf{if} statement in line $9$ of the $FL-Matching$ algorithm an \emph{inflection point}. At this point, the algorithm checks whether $width(\tau)=m$; if so, a $\mu$-occurrence of the pattern is detected and the trajectory is rolled back.
\begin{lemma}[Trajectory inflection lemma]\label{lem1}
	 Let $\tau_i$ be the trajectory at the inflection point in the $i$-th iteration of the $FL-Matching$ algorithm, $1 \leq i \leq n$.
	 Then $\tau_i$ consists of all borders of the string $X[1..i]$.
\end{lemma}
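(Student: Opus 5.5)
The plan is to prove the lemma by induction on $i$. I will read ``border of $X[1..i]$'' in the sense the algorithm actually maintains: a suffix $y$ of $X[1..i]$, of length at most $m$, with $y\sim_\mu P[1..|y|]$. The list is ordered by decreasing length and ends with $\varepsilon$. Under this reading the full-length match, when present, is the head of $\tau_i$; that is exactly why line 9 tests $width(\tau)=m$. I will strengthen the claim to a pair of invariants.

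\emph{(A)} At the start of iteration $i$, $\tau$ consists of all suffixes $y$ of $X[1..i-1]$ with $|y|<m$ and $y\sim_\mu P[1..|y|]$, in decreasing length.

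\emph{(B)} At the inflection point of iteration $i$, $\tau_i$ consists of all such suffixes of $X[1..i]$ with $|y|\le m$.

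In both cases the list must also be a genuine trajectory: each element is the longest border of its predecessor. For this, note that in a set closed under the property ``suffix of the text that $\mu$-matches a prefix of $P$'', the next shorter element of a member $y$ is exactly $LB(y)$. Indeed, any proper suffix of $y$ is also a suffix of the text, so it lies in the set if and only if it $\mu$-matches a prefix of $P$. Hence ordering the set by decreasing length reproduces the recursive definition of $\tau$.

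The base case is $i=1$, where $\tau=\tau_0=[\varepsilon]$ and (A) holds trivially. The step (B) $\Rightarrow$ (A) for $i+1$ uses line 9. If $width(\tau_i)=m$, the rollback removes exactly the unique element of length $m$ and leaves the tail, which is still a valid trajectory. Otherwise no element has length $m$. Either way we obtain the set of matching suffixes of $X[1..i]$ of length $<m$.

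The main step is (A) $\Rightarrow$ (B), and it rests on the following characterization. A nonempty $y$ is a matching suffix of $X[1..i]$ if and only if $y=y'X[i]$, where $y'$ is a matching suffix of $X[1..i-1]$ with $|y'|<m$ and $X[i]\sim_\mu P[|y'|+1]$. By (A), the candidates $y'$ are exactly the elements of $\tau$. The \textbf{while} loop discards heads $x$ for which $isPromoted(\tau,X[i])$ fails, i.e.\ $X[i]\not\sim_\mu P[|x|+1]$; such $x$ produce no element of the target set. It also discards only elements longer than the surviving head, so every useful candidate is still in the list.

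Two cases remain.

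\emph{Case 1: the loop stops at a promotable head.} Then $promote$ builds $\{x'X[i]\mid x'\in\tau,\ X[i]\sim_\mu P[|x'|+1]\}\cup\{\varepsilon\}$ over the remaining list. Since the discarded elements contribute nothing, this equals the full target set.

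\emph{Case 2: the loop stops at $[\varepsilon]$ without promotion.} Then $X[i]\not\sim_\mu P[1]$. No candidate extends, since all of them were rejected, so the target set is $\{\varepsilon\}$. This agrees with $\tau_i=[\varepsilon]$.

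Ordering and the trajectory property then follow from the closure remark above.

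I expect the main obstacle to be the justification that the rollbacks in the \textbf{while} loop lose nothing. This requires that each discarded head fails the extension test for $X[i]$ itself, and not merely that it is ``long''. It also requires that $promote$, applied only to the surviving tail, still produces every extension. I would make this explicit by stating the characterization of matching suffixes as a separate claim first, and then checking the two termination cases of the loop against it.
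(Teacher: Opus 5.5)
Your proof is correct and follows essentially the same route as the paper. Both argue by induction on $i$. Soundness comes from the fact that the rollbacks only discard elements and $promote$ only produces extensions. Completeness comes from writing a nonempty border as $y=y'X[i]$ and observing that $y'$ survives both the line-11 rollback (since $|y'|<m$) and the \textbf{while} loop (since it is promotable).

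Your additions make explicit what the paper uses tacitly: the split into a start-of-iteration invariant (A) and an inflection-point invariant (B), and the check that the resulting list is a genuine trajectory. The same goes for reading ``border'' as a possibly non-proper suffix of length at most $m$. That reading is in fact the one the paper's Theorem~\ref{th1} and Example~\ref{ex2} require.
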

\begin{proof}
We prove the lemma by induction on $i$. The base case $i=1$ is immediate. Assume that the assertion holds for $i-1$, where $2 \leq i \leq n$, and prove that it also holds for $i$.
   
 First, note that $\tau_i$ is obtained from $\tau_{i-1}$ by applying the $rollback$ operation several times, followed by the $promote$ operation at most once. After all rollback operations have been applied, we obtain a trajectory whose elements are all borders of the string $X[1..i-1]$. The subsequent execution of the $promote$ operation in line $7$ constructs a trajectory whose elements are borders of $X[1..i]$. Hence, every element of $\tau_{i}$ is a border of $X[1..i]$.

To prove the converse, suppose that there exists a non-empty border $y$ of $X[1..i]$ that is not an element of $\tau_i$.

Then, $y=y'X[i]$, where $y'$ is a border of $X[1..i-1]$ and $X[i]\sim_\mu P[|y|]$. By the induction hypothesis, $y'$ is an element of $\tau_{i-1}$.
It cannot be removed from this trajectory when executing the $rollback$ operation in line $11$ of the $(i-1)$-th iteration, since $|y^{'}|<m$, nor when executing the $rollback$ operation in line $4$ of the $i$-th iteration, since 
\[
         X[i] \sim_\mu P[|y|].
\]
As a result, $y=y'c$ must have been added to $\tau_i$ after the execution of the $promote$ operation in line $7$, which yields a contradiction.
\end{proof}
\begin{corollary}\label{cor1}  
	  Let $\tau_i$ be the trajectory at the inflection point in the $i$-th iteration of the $FL-Matching$ algorithm, $1 \leq i \leq n$. Then 
	  \[
	  	head(\tau_i)=LB(X[1..i]).
	  \]
\end{corollary}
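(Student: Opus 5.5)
The plan is to derive the corollary directly from Lemma~\ref{lem1}, using two facts: the trajectory elements are all the borders of $X[1..i]$, and the elements of a trajectory are ordered by strictly decreasing length, so the head is the longest one.

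First I will check that $\tau_i$ is ordered by decreasing length. The initial trajectory $\tau_0=[\varepsilon]$ is trivially ordered. The $promote$ operation explicitly returns the strings of $\tau'$ ordered by decreasing length. The $rollback$ operation takes a tail of a list, and a tail of a list ordered by decreasing length is still ordered. A trivial induction over the operations executed by the $FL-Matching$ algorithm therefore shows that every trajectory it produces, in particular $\tau_i$, is a list of distinct strings of strictly decreasing length. Distinctness holds because all elements are suffixes of the same string, so equal lengths would force equal strings.

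Next I will combine this with Lemma~\ref{lem1}, by which the set of elements of $\tau_i$ is exactly the set of borders of $X[1..i]$. This set always contains $\varepsilon$, which is a border whenever $X[1..i]\neq\varepsilon$, since $i\geq 1$. It is therefore nonempty and has a unique longest element, namely $LB(X[1..i])$. Since $head(\tau_i)$ is the element of maximal length, $head(\tau_i)=LB(X[1..i])$.

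The main obstacle is a definitional mismatch. At the inflection point $head(\tau_i)$ may have length $m$ and then equals $X[1..i]$ itself when $i=m$, whereas a border is defined as a \emph{proper} suffix; the notion is also stated only for strings of length at most $m$. I would handle this the same way Lemma~\ref{lem1} implicitly does: read ``border of $X[1..i]$'' as a suffix $y$ of $X[1..i]$ with $|y|\leq m$ and $y\sim_\mu P[1..|y|]$, where $y$ is allowed to be all of $X[1..i]$ only in the sense used by the lemma. Under whatever reading Lemma~\ref{lem1} is taken in, $LB$ is the longest element of the same set, so the corollary follows verbatim. I would add a remark to this effect rather than re-prove the lemma.
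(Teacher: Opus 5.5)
Your proof is correct and follows the paper's route: the corollary is presented as an immediate consequence of Lemma~\ref{lem1} (cf.\ Remark~\ref{rem1}), since the head is the longest element of a trajectory listed by decreasing length. Your definitional caveat is justified and applies more widely than $i=m$: whenever $i\leq m$ and $X[1..i]\sim_\mu P[1..i]$ (already for $i=1$, or after reading $132$ in Example~\ref{ex2}), the head is $X[1..i]$ itself, so the statement holds only under the reading you adopt, namely suffixes (not necessarily proper) of length at most $m$ that $\mu$-match the corresponding prefix of $P$, which is the reading the paper tacitly uses in the proof of Theorem~\ref{th1}.
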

\begin{remark}\label{rem1}  
Note that Lemma ~\ref{lem1} can be viewed as stating that $head(\tau_i)=LB(X[1..i])$ is an \emph{invariant} of the algorithm at the inflection point.
\end{remark}

\begin{theorem}\label{th1}
	The \emph{FL-Matching} algorithm finds all and only the $\mu$-occurrences of the fuzzy pattern $P$ in the string $X$.
\end{theorem}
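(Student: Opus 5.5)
The plan is to derive Theorem~\ref{th1} directly from Lemma~\ref{lem1} (equivalently Corollary~\ref{cor1}) by examining what happens at the inflection point of each iteration. Output is produced only at line~9, immediately after the inflection point, and only when $width(\tau_i)=m$. So the theorem reduces to one claim: for every $1\le i\le n$,
\[
width(\tau_i)=m \iff i\ge m \text{ and } X[i-m+1..i]\sim_\mu P[1..m].
\]
The ``only'' half is the forward implication and the ``all'' half is the reverse one.

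\emph{Forward direction.} Suppose $width(\tau_i)=m$. Then $head(\tau_i)$ is a string $y$ of length $m$. By Lemma~\ref{lem1}, every element of $\tau_i$ is a border of $X[1..i]$. Hence $y$ is a suffix of $X[1..i]$, which forces $i\ge m$ and $y=X[i-m+1..i]$. The border condition also gives $y\sim_\mu P[1..m]$, so $i$ is the end position of a genuine $\mu$-occurrence. Independently of the lemma, this also follows from the invariant built into $promote$: every element $x'$ of a trajectory satisfies $x'\sim_\mu P[1..|x'|]$ and is a suffix of the processed text.

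\emph{Reverse direction.} Suppose $X[i-m+1..i]\sim_\mu P[1..m]$. Then $y=X[i-m+1..i]$ is a suffix of $X[1..i]$ that $\mu$-matches the corresponding prefix of $P$. By Lemma~\ref{lem1}, $y\in\tau_i$. Trajectories are ordered by decreasing length, and no element can have length exceeding $m$, since a border matches a prefix of $P$. Therefore $head(\tau_i)=y$ and $width(\tau_i)=m$, so $i$ is printed. Each iteration prints at most once and only its own index $i$, so no position is reported twice or incorrectly.

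The main obstacle is a boundary subtlety in the definition of border, which requires $y\sqsupset x$ to be a \emph{proper} suffix, together with the implicit restriction $|x|\le m$. When $i=m$ and $X[1..m]$ itself matches $P$, the occurrence is not a proper suffix of $X[1..i]$. More generally, $X[1..i]$ may be longer than $m$. I would handle this by reading ``border of $X[1..i]$'' in Lemma~\ref{lem1} as ``suffix $y$ of $X[1..i]$ with $|y|\le m$ and $y\sim_\mu P[1..|y|]$'', which is the sense the lemma's proof actually uses. As a sanity check I would treat the case $i=m$ directly. Since $length(\tau)>1$ is required for any rollback in line~4, and in the case $i=m$ with $X[1..m]$ matching every character promotes, $m$ successive applications of $promote$ take the head from $\varepsilon$ to $X[1..m]$. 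This uses $width(\tau')=width(\tau)+1$, and since $width<m$ before the $m$-th step, the reset in line~11 never fires early. I would also note that the reset in line~11 of iteration $i$ only removes the element of length $m$. This does not interfere with the next iteration, because a length-$m$ string cannot be extended, and Lemma~\ref{lem1} already accounts for it.
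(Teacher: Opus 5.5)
Your proposal is correct and follows the paper's proof. In the ``only'' direction, a head of width $m$ at the inflection point is a length-$m$ suffix of $X[1..i]$ that $\mu$-matches $P$; in the ``all'' direction, Lemma~\ref{lem1}/Corollary~\ref{cor1} force a length-$m$ matching suffix to be $head(\tau_i)$, so $width(\tau_i)=m$ and $i$ is printed. Your remark about the proper-suffix and $|x|\le m$ restrictions in the definition of border is well taken: the paper's step ``it is the longest border of $X[1..i]$'' tacitly relies on the relaxed reading you propose, and Example~\ref{ex2} needs that reading too (e.g.\ $\tau_1=[132,2,\varepsilon]$ contains the whole processed text).
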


\begin{proof}
	Note that whenever $width(\tau)=m$ at the inflection point, the algorithm detects a $\mu$-occurrence of $P$ in $X$. Thus, the algorithm reports only $\mu$-occurrences of $P$ in $X$.  
	
	Conversely, suppose that
	\[
	X[i-m+1..i]\sim_{\mu}P[1..m], m \leq i \leq n,
	\]
	is a $\mu$-occurrence of $P$ in $X$. 
	Since $X[i-m+1..i]$ is a suffix of $X[1..i]$ of length $m$ that $\mu$-matches $P$, it is the longest border of $X[1..i]$.
	By Corollary ~\ref{cor1},
	\[
	 head(\tau_i)=LB(X[1..i])=X[i-m+1..i],
	\]
	and hence $width(\tau_i)=|head(\tau_i)|=m$ at the inflection point of the $i$-th iteration of the algorithm. Therefore, the algorithm prints the specified $\mu$-occurrence in line $10$. 	
\end{proof}
\title{ANALYSIS}

Let us use the \emph{potential method} \cite{ref15} to estimate the complexity of the \textbf{for} loop in lines 2--13. Define the potential at the beginning of each iteration as $length(\tau)$. Initially, the potential is $0$, and it never becomes negative.

Note that the \textbf{while} loop in lines 3--5 has $O(1)$ amortized complexity: its actual cost is compensated by a decrease in the potential.

The first \textbf{if} statement in lines 6--8 has $O(m)$ amortized complexity: its actual cost is $O(m)$, while the potential increases by at most $1$.

The second \textbf{if} statement in lines 9--12 has $O(1)$ amortized complexity: its actual cost is $O(1)$, and the potential decreases by at least $1$.

Therefore, the amortized cost of one execution of the body of the \textbf{for} loop is $O(m)$. Since the loop is executed $O(n)$ times, the total time complexity of the \emph{FL-Matching} algorithm is $O(mn)$.

The space complexity of the algorithm is bounded by the maximum capacity of a trajectory $\tau$, which is at most
\[
length(\tau) \cdot width(\tau) = O(m^2).
\]

\section{Fuzzy pattern matching in hierarchical structures}
In this section, we consider the problem of fuzzy pattern matching in a hierarchical structure represented as an ordered rooted tree whose nodes are labeled with characters from the basic alphabet. We say that a fuzzy pattern occurs in such a structure if there is a descending chain of adjacent nodes of the same length as the pattern whose sequence of labels $\mu$-matches the fuzzy pattern.

We propose an algorithm for finding all occurrences of a fuzzy pattern in a tree by traversing it in preorder and tracking trajectories at the visited nodes.
\subsection{Fuzzy \emph{H}-matching problem}

Let $G$ be a finite partially ordered set of nodes with a smallest element $\Lambda$, which is isomorphic to a tree whose root node is $\Lambda$. We assume that this tree is equipped with an ordering and refer to $G$ as an ordered rooted tree. For the tree $G$, the terms "parent" and "child" are used in their usual sense.  

We denote the set of child nodes of a node $g\in G$ by $succ(g)$;
this set is empty for leaf nodes.
We define a \emph{descending chain} of nodes as a sequence
$g_1,\ldots,g_k$ of $k\geq 1$ of nodes such that
$g_i\in succ(g_{i-1}), 2\leq i\leq k$,
and call $k$ the length of the chain.
We represent a chain of length $k$ as an array
$g[1..k]$ of nodes.
A descending chain in the derived tree corresponds to a saturated chain in the underlying partially ordered set. Accordingly, we use the terms \emph{descending chain} in the tree and \emph{saturated chain} in the partially ordered set interchangeably to refer to the same sequence of nodes.

Let $\Sigma$ be the basic alphabet. A mapping
$\xi:G\rightarrow\Sigma$ that assigns a character from $\Sigma$ to
each node of $G$ is called a \emph{labeling} of $G$.
We call the triple $\widehat{G}=\langle G,\Sigma,\xi\rangle$ 
a \emph{hierarchical structure} over $G$.

Given a fuzzy pattern $P[1..m]$, a chain
$g=g[1..m]$ of length $m$ in a hierarchical structure
$\widehat{G}$, and a threshold $\mu\in D$, we say that $g$
$\mu$-matches $P$, denoted by
$g\sim_{\mu}P$, if
\[
\xi(g[j])\sim_{\mu}P[j],\qquad 1\leq j\leq m.
\]
We define the problem of finding a fuzzy pattern $P$ in a hierarchical structure $\widehat{G}$ as the problem of finding all descending chains in $G$ that $\mu$-match $P$, with each occurrence identified by its end node in $G$.
We call this problem the \emph{fuzzy H-matching problem}.

Note that the fuzzy $L$-matching problem is a special case of the fuzzy
$H$-matching problem if the string $X[1..n]$ is viewed as a hierarchical
structure whose underlying tree consists of a single descending chain of length $n$.
\subsection{Fuzzy \emph{H}-matching algorithm}
We represent our solution to the fuzzy $H$-matching problem as
a \emph{discrete converter}. Its control component, represented by the
\emph{FH-Matching} algorithm, traverses the tree $G$ in \emph{preorder}
using a stack that stores nodes together with trajectories of the parent nodes. Its operational component, represented by the $visitNode$ subroutine, promotes the trajectory associated with the parent node to the current node.

Detailed descriptions of these algorithms are presented in
Figures~\ref{fig:3} and~\ref{fig:4}.

\begin{algorithm}[H] \caption{$FH-Matching //$ \text{Control component}}     
	
	\KwIn{
		\text{   } Fuzzy pattern $P=P[1..m]$, hierarchical structure $\widehat G = (G, \Sigma, \xi)$, threshold $\mu \in D$
	}
	\KwOut{
		All $\mu$ - occurrences of $P$ in $\widehat G$ 
	}
	
	$stack.push( \langle\Lambda, \tau_{0} \rangle )$ // stack initialization
	
	\While{ stack is non-empty }
	{
		$\langle g, \tau \rangle = stack.top(), stack.pop()$
		
		$\tau = visitNode(g, \tau)$
		
		let $succ(g) = \langle g_{1}, ..., g_{h} \rangle, h \geq 0$
				 
		\For{$i = h$ \textbf{downto} $1$} 
		{
				$stack.push(\langle g_{i}, \tau \rangle )$    
	    }		
	} %while
\end{algorithm} 
\begin{figure}[!h] 
	\caption{The $FH-Matching$ algorithm.}
	\label{fig:3}
\end{figure}

\begin{algorithm}[H] \caption{$visitNode //$ \text{Operational component}}     
	
	\KwIn{	
		\text{   } Current node $g \in G$ labeled by $c=\xi(g)\in \Sigma$, trajectory $\tau$ for the parent of node $g$
	}
	\KwOut{
		Trajectory for node $g$  
	}	 
	\While{
			$length(\tau) > 1$ and not $isPromoted(\tau, c)$ 			
	}
	{ 
			$\tau = rollback( \tau ) $
	}
	
	\If{$isPromoted(\tau, c)$}{
			$\tau = promote( \tau, c)$ }
	
	\If
	{
		$width(\tau) == m$
	}                
	{ 	
		$Print( g )$  //an occurrence of $P$ in $\widehat{G}$ is found 
		
		$\tau = rollback( \tau )$
	}
	
	\Return {$\tau$}		

\end{algorithm} 
\begin{figure}[!h] 
	\caption{The $visitNode$ procedure.}
	\label{fig:4}
\end{figure}

\subsection{Events}
When working with hierarchical structures, it is necessary to complement the concept of a trajectory by associating it with a node. We call such a pair an \emph{event}. Processing an event involves two cases: the trajectory stored in the event may be that associated with the parent node or that associated with the node of the event. In the first case, we say that the event is \emph{unrealized}, and in the second, that it is \emph{realized}.

Formally, an event is defined as a pair $e=\langle g,\tau\rangle$, where $g\in G$ and $\tau\in\Xi$.
However, during the execution of the \emph{FH-Matching} algorithm, there are two possibilities for the trajectory associated with a given node $g$. To clarify this point, assume that

\begin{itemize}
\item
	$g\in G$ is a node,
\item
    $path(g)=g_1\ldots g_k$, where $g_1=\Lambda$ and $g_k=g$, is the unique path from the root to $g$,
\item
	$x=\xi(path(g))=\xi(g_1)\cdots\xi(g_k)\in\Sigma^{*}$ is the string of labels along this path,
\item
	$\tau$ is the trajectory of $x$.
\end{itemize}
Then, an event associated with node $g$ contains either $\tau$, in which case it is called \emph{realized}, or $rollback(\tau)$, provided that $length(\tau)>1$, in which case it is called \emph{unrealized}.

We denote by $Events^{-}$ the set of all unrealized events,
by $Events^{+}$ the set of all realized events, and by
\[
Events=Events^{-}\cup Events^{+}
\]the set of all events.
We distinguish the initial event
\[
e_{0}=\langle\Lambda,\tau_{0}\rangle\in Events^{-} 
\]
and the set
\[F=
\left\{
e=\langle g,\tau\rangle\in Events^{+}
\mid width(\tau)=m
\right\}
\]
of final events.

The execution of the \emph{FH-Matching} algorithm consists of sequentially extracting the next event from the stack of unrealized events, realizing it using the $visitNode$ procedure, and adding the resulting unrealized events corresponding to the child nodes, if any, to the stack.

For convenience, we represent the $visitNode$ procedure as a mapping
\[
visitNode: Events^{-}\rightarrow Events^{+},
\]
which maps an unrealized event
$e=\langle g,rollback(\tau) \rangle$ to the realized event $e'=\langle g,\tau \rangle$.

\subsection{Fuzzy hierarchical matching in terms of a pushdown transition system}

From a theoretical point of view, the execution of the
$FH-Matching$ algorithm can be represented as a process in the space
of events, using a stack of unrealized events and driven by the
sequence of node labels encountered during the traversal.

Using standard notation for pushdown automata \cite{ref11}, let us
consider the labeled pushdown transition system
\[
T_{2}=(S,s_{0},F,\Sigma,\Gamma,Z_{0},\Delta),
\]
where
\begin{itemize}
	\item[$\sq$]
	$S=Events\cup\{nil\}$ is the set of states, consisting of the set of events supplemented by a \emph{dummy} state $nil$;
	
	\item[$\sq$]
	$s_{0}=e_{0}$ is the initial state;
	
	\item[$\sq$]
	$F$ is the set of final states and coincides with the set of final events;
	
	\item[$\sq$]
	$\Sigma$ is the input alphabet;
	
	\item[$\sq$]
	$\Gamma=Events^{-}$ is the stack alphabet;
	
	\item[$\sq$]
	$Z_{0}=e_{0}$ is the initial stack symbol;
	
	\item[$\sq$]
	$\Delta:S\times(\Sigma\cup\{\epsilon\})\times(\Gamma \cup\{\varepsilon\})
	\rightarrow S\times\Gamma^{*}$ is the transition function defined
	by the following rules:
	
	$H1.$ \emph{Event realization:}
	\[
	\frac
	{ 
	    e = \langle g, \tau \rangle\in Events^{-}, e=Z
	}
	{
		e \xrightarrow{\xi(g), Z/\varepsilon} e',
		e' = visitNode(e)
	},
	\] 
	$H2.$ \emph{Pushing successors into the stack:}
	\[
	\frac
	{
	  e=\langle g, \tau \rangle \in Events^{+},
	  succ(g) = \langle g_{1}, ..., g_{h}\rangle,
	  h \geq 0	
	}
	{
	 	e \xrightarrow{\epsilon, 
	 		\varepsilon/Z_{1} ... Z_{h} } 
 		nil,
	 	Z_{1}=\langle g_1, \tau \rangle, 
	 	..., 
	 	Z_{h}=\langle g_{h}, \tau \rangle	
	},
	\]
	
	$H3.$ \emph{Popping the next unrealized event from the stack:}
	\[
	\frac
	{
		Z\in Events^{-}
	}
	{
		 nil 
		 \xrightarrow{\epsilon, Z/\varepsilon} 
		 Z.
	}
	\]
\end{itemize} 

We define a \emph{configuration} of our transition system as a pair 
\[\langle e;stack\rangle,
\] 
where $e\in S$ and $stack$ denotes the contents of the stack, listed from top to bottom, at the time $e$ is processed.
Among the configurations, we distinguish the initial configuration $\langle e_0;[Z_0]\rangle$ and the final configurations, whose events are final events.

A sequence of configurations starting from the initial configuration and following the transition function is called a \emph{process} in the transition system. Such a process outputs a node whenever it visits a final state. We call a process \emph{terminated} if its last configuration is
$\langle nil; stack=[] \rangle$.
\begin{example} \label{ex2.3}
	We represent the execution of the \emph{FH-Matching} algorithm for the fuzzy pattern $P[1..2]=SM$, where $S$ and $M$ are defined as in Example~\ref{ex1}, with matching parameter $\mu=0.75$, on the hierarchical structure $\widehat{G}$ shown in Figure ~\ref{fig:academic_tree}, as a terminated process in the transition system $T_2$.
In the figure, slash notation is used to denote a node together with its label. 
\begin{figure}[!h]
\centering
\begin{forest}
for tree={
edge={-Stealth, thick},
l sep=0.6cm, 
s sep=0.5cm,
math content,
font=\small
}
[\Lambda / 1
[g_1 / 5]
[g_2 / 2
[g_3 / 4]
]
]
\end{forest}
\caption{A hierarchical structure.}
\label{fig:academic_tree}
\end{figure}

The execution is as follows:
	\[
	\langle e_{0}; stack = [e_{0}]
	\rangle
	\xrightarrow{1}  //\text{applying rule } H1
	\] 
	\[
	\langle
	e_{1} =\langle\Lambda, \tau_{1} = [1,\varepsilon] \rangle \rangle; stack=[]
	\rangle 
	\xrightarrow{\epsilon}
	//\text{applying rule } H2
	\]
	\[
	\langle
	nil; stack = [e_{2} = \langle g_{1}, \tau_{1} \rangle, e_{3} = \langle g_{2}, \tau_{1}\rangle ] 
	\rangle
	\xrightarrow{\epsilon}
	//\text{applying rule } H3 
	\]
	\[
	\langle
	e_{2}; stack = [e_{2}, e_{3}] 
	\rangle
	\xrightarrow{5}
	//\text{applying rule } H1 
	\]	
	\[
	\langle
	e_{4}=\langle g_{1}, \tau_{0}
	\rangle;
     stack = [e_{3}] 
	\rangle
	\xrightarrow{\epsilon} 
	//\text{applying rule } H2
	\]
	\[
	\langle
	nil; stack=[e_3]
	\rangle 
	\xrightarrow{\epsilon} 
	//\text{applying rule } H3
	\]
	\[
	\langle
	e_{3}; stack=[e_{3}]
	\rangle 
	\xrightarrow{2}
	//\text{applying rule } H1 
	\]		
	\[
	\langle
	e_{5}^{*} = \langle g_{2}, \tau_{2} = [12, 2, \varepsilon] \rangle \rangle; stack=[] 
	\rangle
	\xrightarrow{\epsilon}
	//\text{applying rule } H2 
	\]
	\[
	\langle
	nil; stack = [e_{6} = \langle g_{3}, \tau_{2} \rangle ]
	\rangle 
	\xrightarrow{\epsilon}
	//\text{applying rule } H3 
	\]
	\[
	\langle
	e_{6}; stack = [e_{6}]
	\rangle 
	\xrightarrow{4}
	//\text{applying rule } H1 
	\]
	\[
	\langle
	e_{7}^{*}= \langle g_{3}, \tau_{3} = [24, \varepsilon\ \rangle \rangle; stack=[]
	\rangle 
	\xrightarrow{\epsilon} 
	//\text{applying rule } H2
	\]
	\[
	 \langle
	 nil; stack=[]
	 \rangle.  
	\]
	
	This process outputs the node $g_{2}$ upon reaching state $e_{5}$
	and the node $g_{3}$ upon reaching state $e_{7}$.
	Thus, the hierarchical structure $\widehat{G}$ contains two
	occurrences of the fuzzy pattern $P=SM$, ending at nodes $g_{2}$ and
	$g_{3}$, respectively. These occurrences correspond to the following node chains:
	\[
		(\Lambda/1)\,(g_{2}/2) \qquad \text{ and } \qquad (g_{2}/2)\,(g_{3}/4).
	\]
\end{example}

\subsection{Justification and analysis}

\begin{theorem}\label{th2}
	The \emph{FH-Matching} algorithm finds all and only the $\mu$-occurrences of the fuzzy pattern $P$ in the hierarchical structure $\widehat{G}$.
\end{theorem}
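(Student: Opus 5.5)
The plan is to reduce Theorem~\ref{th2} to the linear case by looking at root-to-node paths. For a node $g$ with $path(g)=g_1\ldots g_k$, the string $x_g=\xi(path(g))$ is processed by \emph{FH-Matching} exactly as \emph{FL-Matching} would process it. The reason is that the trajectory handed to $g$ comes from its parent and depends only on the labels along the path. The key invariant, which is the hierarchical analogue of Lemma~\ref{lem1}, is the following.

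\emph{Claim.} When $visitNode$ is executed on the event popped for $g$, the trajectory at its inflection point (just before the test $width(\tau)=m$) consists of all borders of $x_g$. Hence, by the argument of Corollary~\ref{cor1}, its head is $LB(x_g)$.

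I would prove the claim by induction on the depth $k$ of $g$, i.e.\ along the preorder traversal.

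For the base case, the root gets $\tau_0$, and $visitNode$ applied to $\xi(\Lambda)$ gives the inflection trajectory of the first iteration of \emph{FL-Matching} on $x_\Lambda$.

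For the inductive step, let $g$ be a child of $p$. The trajectory $\tau$ pushed with $\langle g,\tau\rangle$ is exactly the value returned by $visitNode$ for $p$. This needs a structural remark about the control component. The returned trajectory is stored by value in the stack entries of all children of $p$, and trajectories are immutable lists, since $promote$ and $rollback$ build new lists or take tails. So the processing of other subtrees, and the order in which events are popped, cannot alter the trajectory $g$ receives. Consequently the sequence of trajectories along $path(g)$ coincides with the sequence of trajectories produced by \emph{FL-Matching} on $X=x_g$ at iterations $1,\ldots,k$. The body of $visitNode$ is literally the body of the \textbf{for} loop of \emph{FL-Matching}, including the final rollback. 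Lemma~\ref{lem1} applied to $X=x_g$ with $i=k$ then gives the claim.

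From the claim the theorem follows exactly as in the proof of Theorem~\ref{th1}. For soundness, $Print(g)$ is executed only when $width(\tau)=m$. By the claim the head is then a border of $x_g$ of length $m$ that $\mu$-matches $P$, namely the labels of the last $m$ nodes of $path(g)$, which form a descending chain ending at $g$. For completeness, suppose a descending chain $g[1..m]$ ending at $g$ $\mu$-matches $P$. That chain is a suffix of $path(g)$, since in a tree the ancestors of $g$ are unique. So $x_g$ has a border of length $m$, which is necessarily the longest border. The inflection trajectory then has width $m$ and $g$ is printed. Finally, I would note that preorder traversal with the stack visits every node exactly once. Each node is pushed exactly once by its parent, or initially in the case of the root, and popped exactly once. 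Hence each occurrence is reported exactly once.

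The main obstacle is making rigorous the step that identifies the trajectory received by $g$ with the \emph{FL-Matching} trajectory on $x_g$. This needs the independence of sibling subtrees, which rests on immutability of the stored trajectories, and an explicit statement that running \emph{FL-Matching} on $x_g$ passes through the inflection trajectories of all ancestors of $g$. I would state this as a separate lemma, proved by induction on depth, before invoking Lemma~\ref{lem1}.
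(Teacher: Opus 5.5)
Your proposal is correct and takes essentially the same route as the paper's proof. Both identify each node $g$ with the string $\xi(path(g))$, use that the trajectory computed by $visitNode$ has head $LB(\xi(path(g)))$ exactly as in the linear case, and rely on the preorder traversal visiting every node exactly once; your version adds rigor the paper's sketch omits by stating the invariant at the inflection point (before the possible final rollback), deriving it from Lemma~\ref{lem1} applied to $X=\xi(path(g))$ by induction on depth, and noting that the trajectories stored on the stack are immutable, so sibling subtrees cannot interfere.
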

\begin{proof} 
		The result is consistent with the corresponding result for linear
		structures (Theorem~\ref{th1}). Since the preorder traversal of $G$ visits every node exactly once, every possible end node of a $\mu$-occurrence is considered by the algorithm. Moreover, after a node $g\in G$ is visited by the $visitNode$ procedure, the head of the resulting trajectory is exactly the longest suffix of $x=\xi(path(g))$ that $\mu$-matches the corresponding prefix of $P$. In particular, when the trajectory has width $m$, the corresponding chain is a $\mu$-occurrence of $P$, and every such occurrence is detected when its end node is visited.		
\end{proof}
\title{ANALYSIS}

During the preorder traversal of the tree $G$, every node is visited exactly once, and processing each node takes $O(m)$ time. Therefore, the time complexity of the \emph{FH-Matching} algorithm is $O(mN)$, where $N$ is the number of nodes in $G$.

The space complexity of the algorithm is determined by the maximum size of the stack. Since the stack contains $O(H)$ elements, where $H$ is the height of the tree, and each stack element requires $O(m^{2})$ space, the space complexity of the \emph{FH-Matching} algorithm is $O(m^{2}H)$.

\section{Conclusion}
The paper considers the problem of fuzzy pattern matching in linear and hierarchical structures. A fuzzy pattern is defined as a finite sequence of fuzzy properties associated with pattern symbols. A segment of the text over the basic alphabet matches a fuzzy pattern if they are of the same length and each character in the segment satisfies the fuzzy property associated with the corresponding pattern symbol to at least a prescribed degree.

The following items summarize our main contributions to the fuzzy pattern matching problem in linear and hierarchical structures.

\begin{itemize}
	\item
	
	\emph{Matching in linear structures.} 
	The problem of finding a fuzzy pattern in a linear structure is solved by an algorithm whose control structure is similar to that of the KMP algorithm but which processes trajectories rather than integers. The notion of a \emph{trajectory} generalizes the role of the prefix-function array underlying the KMP algorithm in the following sense. The prefix-function array is determined solely by the pattern and can therefore be constructed during the preprocessing stage. In contrast, a trajectory depends on both the pattern and the text being processed and is constructed dynamically during the execution of the algorithm. The algorithm is represented as a labeled transition system \cite{ref16} whose states are trajectories.
	
	The time complexity of the algorithm is $O(mn)$, and its space complexity is $O(m^{2})$, where $m$ and $n$ denote the lengths of the pattern and the sequence, respectively.
	
	\item
	
	\emph{Matching in hierarchical structures.} The fuzzy matching problem is extended to hierarchical structures represented by finite ordered rooted trees. In this setting, finding a fuzzy pattern amounts to finding all descending chains of nodes whose sequences of labels match the fuzzy pattern. The problem is solved by traversing the tree in preorder and tracking trajectories associated with the visited nodes. The algorithm is represented as a labeled pushdown transition system \cite{ref16} whose states are events, each consisting of a tree node together with its associated trajectory.
	
	The time complexity of the algorithm is $O(mN)$, and its space complexity is $O(m^{2}H)$, where $m$ denotes the length of the pattern, and $N$ and $H$ denote the number of nodes and the height of the tree, respectively.
\end{itemize}

\nocite{*}
\bibliographystyle{fundam}
\bibliography{citations}

\end{document}